\theoremstyle{plain}
\numberwithin{equation}{section}
\newtheorem{thm}{Theorem}[section]
\newcommand{\complex}{{\mathbb C}}
\newcommand{\real}{{\mathbb R}}
\newcommand{\ascript}{{\mathcal A}}
\newcommand{\cscript}{{\mathcal C}}
\newcommand{\dscript}{{\mathcal D}}
\newcommand{\pscript}{{\mathcal P}}
\newcommand{\rscript}{{\mathcal R}}
\newcommand{\qscript}{{\mathcal Q}}
\newcommand{\tscript}{{\mathcal T}}
\newcommand{\rmim}{\mathrm{Im}}
\newcommand{\rmcyl}{\mathrm{cyl}}
\newcommand{\ftilde}{\widetilde{f}}
\newcommand{\ctimes}{\mathrel{\mathlarger\cdot}}
\newcommand{\ab}[1]{\left|#1\right|}
\newcommand{\brac}[1]{\left\{#1\right\}}
\newcommand{\paren}[1]{\left(#1\right)}
\newcommand{\sqbrac}[1]{\left[#1\right]}
\newcommand{\elbows}[1]{{\left\langle#1\right\rangle}}
\newcommand{\ket}[1]{{\left|#1\right>}}
\newcommand{\bra}[1]{{\left<#1\right|}}
\begin{document}

\title{AN EINSTEIN EQUATION FOR\\DISCRETE QUANTUM GRAVITY
}
\author{S. Gudder\\ Department of Mathematics\\
University of Denver\\ Denver, Colorado 80208, U.S.A.\\
sgudder@du.edu
}
\date{}
\maketitle

\begin{abstract}
The basic framework for this article is the causal set approach to discrete quantum gravity (DQG). Let $Q_n$ be the collection of causal sets with cardinality not greater than $n$ and let $K_n$ be the standard Hilbert space of complex-valued functions on $Q_n$. The formalism of DQG presents us with a decoherence matrix $D_n(x,y)$,
$x,y\in Q_n$. There is a growth order in $Q_n$ and a path in $Q_n$ is a maximal chain relative to this order. We denote the set of paths in $Q_n$ by $\Omega _n$. For $\omega ,\omega '\in\Omega _n$ we define a bidifference operator
$\varbigtriangledown _{\omega ,\omega '}^n$ on $K_n\otimes K_n$ that is covariant in the sense that
$\varbigtriangledown _{\omega ,\omega '}^n$ leaves $D_n$ stationary. We then define the curvature operator
$\rscript _{\omega ,\omega '}^n=\varbigtriangledown _{\omega ,\omega '}^n
  -\varbigtriangledown _{\omega ',\omega}^n$. It turns out that $\rscript _{\omega ,\omega '}^n$ naturally decomposes into two parts $\rscript _{\omega ,\omega '}^n=\dscript _{\omega ,\omega '}^n+\tscript _{\omega ,\omega '}^n$ where
$\dscript _{\omega ,\omega '}^n$ is closely associated with $D_n$ and is called the metric operator while
$\tscript _{\omega ,\omega '}^n$ is called the mass-energy operator. This decomposition is a discrete analogue of Einstein's equation of general relativity. Our analogue may be useful in determining whether general relativity theory is a close approximation to DQG.
\end{abstract}

\section{Causet Approach to DQG}  
A \textit{causal set} (\textit{causet}) is a finite partially ordered set $x$. Thus, $x$ is endowed with an irreflexive, transitive relation $<$ \cite{blms87, sor03,sur11}. That is, $a\not< a$ for all $a\in x$ and $a<b$, $b<c$ imply that $a<c$ for
$a,b,c\in x$. The relation $a<b$ indicates that $b$ is in the causal future of $a$. Let $\pscript _n$ be the collection of all causets with cardinality $n$, $n=1,2,\ldots$, and let $\pscript =\cup\pscript _n$. For $x\in\pscript$, an element $a\in x$ is \textit{maximal} if there is no $b\in x$ with $a<b$. If $x\in\pscript _n$, $y\in\pscript _{n+1}$, then $x$ \textit{produces} $y$ if $y$ is obtained from $x$ by adjoining a single new element $a$ to $x$ that is maximal in $y$. In this way, there is no element of
$y$ in the causal future of $a$. If $x$ produces $y$, we say that $y$ is an \textit{offspring} of $x$ and write $x\to y$.

A \textit{path} in $\pscript$ is a string (sequence) $x_1x_2\cdots$ where $x_i\in\pscript _i$ and $x_i\to x_{i+1}$,
$i=1,2,\ldots\,$. An $n$-\textit{path} in $\pscript$ is a finite string $x_1x_2\cdots x_n$ where again $x_i\in\pscript _i$ and
$x_i\to x_{i+1}$. We denote the set of paths by $\Omega$ and the set of $n$-paths by $\Omega _n$. We think of
$\omega\in\Omega$ as a possible universe (or  universe history). The set of paths whose initial $n$-path is
$\omega _0\in\Omega _n$ is called an \textit{elementary cylinder set} and is denoted by $\rmcyl (\omega _0)$. Thus, if
$\omega _0=x_1x_2\cdots x_n$, then
\begin{equation*}
\rmcyl (\omega _0)=\brac{\omega\in\Omega\colon\omega =x_1x_2\cdots x_ny_{n+1}y_{n+2}\cdots}
\end{equation*}
The \textit{cylinder set generated} by $A\subseteq\Omega _n$ is defined by
\begin{equation*}
\rmcyl (A)=\bigcup _{\omega\in A}\rmcyl (\omega )
\end{equation*}
The collection $\ascript _n=\brac{\rmcyl (A)\colon A\subseteq\Omega _n}$ forms an increasing sequence of algebras $\ascript _1\subseteq\ascript _2\subseteq\cdots$ on $\Omega$ and hence $\cscript (\Omega )=\cup\ascript _n$ is an algebra of subsets of $\Omega$. We denote the $\sigma$-algebra generated by $\cscript (\Omega )$ as $\ascript$.

It is shown \cite{rs00,vr06} that a classical sequential growth process (CSGP) on $\pscript$ that satisfies natural causality and covariance conditions is determined by a sequence of nonnegative numbers $c=(c_0,c_1,\ldots)$ called
\textit{coupling constants}. The coupling constants specify a unique probability measure $\nu _c$ on $\ascript$ making
$(\Omega ,\ascript ,\nu _c)$ a probability space. The \textit{path Hilbert space} is given by
$H=L_2(\Omega ,\ascript ,\nu _c)$. If $\nu _c^n=\nu _c\mid\ascript _n$ is the restriction of $\nu _c$ to $\ascript _n$, then
$H_n=L_2(\Omega ,\ascript _n,\nu _c^n)$ is an increasing sequence of closed subspaces of $H$.

A bounded operator $T$ on $H_n$ will also be considered as a bounded operator on $H$ by defining $Tf=0$ for all
$f\in H_n^\perp$. We denote the characteristic function of a set $A\in\ascript$ by $\chi _A$ and use the notation
$\chi _\Omega =1$. A $q$-\textit{probability operator} is a bounded positive operator $\rho _n$ on $H_n$ that satisfies
$\elbows{\rho _n1,1}=1$. Denote the set of $q$-probability operators on $H_n$ by $\qscript (H_n)$. For
$\rho _n\in\qscript (H_n)$ we define the $n$-\textit{decoherence functional} \cite{gud111,hen09,sor03}
$D_n\colon\ascript _n\times\ascript _n\to\complex$ by
\begin{equation*}
D_n(A,B)=\elbows{\rho _n\chi _B,\chi _A}
\end{equation*}
The functional $D_n(A,B)$ gives a measure of the interference between the events $A$ and $B$ when the system is described by $\rho _n$. It is clear that $D_n(\Omega _n,\Omega _n)=1$, $D_n(A,B)=\overline{D_n(B,A)}$ and
$A\mapsto D_n(A,B)$ is a complex measure for all $B\in\ascript _n$. It is also well known that if
$A_1,\ldots ,A_n\in\ascript _n$, then the matrix with entries $D_n(A_j,A_k)$ is positive semidefinite. In particular the positive semidefinite matrix with entries
\begin{equation*}
D_n(\omega _i,\omega _j)=D_n\paren{\rmcyl (\omega _i),\rmcyl (\omega _j)},\quad\omega _i,\omega _j\in\Omega _n
\end{equation*}
is called the $n$-\textit{decoherence matrix}.

We define the map $\mu _n\colon\ascript _n\to\real ^+$ given by
\begin{equation*}
\mu _n(A)=D_n(A,A)=\elbows{\rho _n\chi _A,\chi _A}
\end{equation*}
Notice that $\mu _n(\Omega )=1$. Although $\mu _n$ is not additive in general, it satisfies the \textit{grade}-2
\textit{additivity condition} \cite{djs10,gud111,hen09,sor94,sor11}: if $A,B,C\in\ascript _n$ are mutually disjoint  then
\begin{equation*}
\mu _n(A\cup B\cup C)=\mu _n(A\cup B)+\mu _n(A\cup C)+\mu _n(B\cup C)-\mu _n(A)-\mu _n(B)-\mu _n(C)
\end{equation*}
We call $\mu _n$ the $q$-\textit{measure} corresponding to $\rho _n$ and interpret $\mu _n(A)$ as the quantum propensity for the occurrence of the event $A\in\ascript _n$. A simple example is to let $\rho _n=I$, $n=1,2,\ldots\,$. Then
\begin{equation*}
D_n(A,B)=\elbows{\chi _B,\chi _A}=\nu _c^n(A\cap B)
\end{equation*}
and $\mu _n(A)=\nu _c^n(A)$, the classical measure. A more interesting example is to let $\rho _n=\ket{1}\bra{1}$, $n=1,2,\ldots\,$. Then
\begin{equation*}
D_n(A,B)=\elbows{\ket{1}\bra{1}\chi _B,\chi _A}=\nu _c^n(A)\nu _c^n(B)
\end{equation*}
and $\mu _n(A)=\sqbrac{\nu _c^n(A)}^2$, the classical measure square.

We call a sequence $\rho _n\in\qscript (H_n)$, $n=1,2,\ldots$, \textit{consistent} if
\begin{equation*}
D_{n+1}(A,B)=D_n(A,B)
\end{equation*}
for all $A,B\in\ascript _n$. A \textit{quantum sequential growth process} (QSGP) is a consistent sequence
$\rho _n\in\qscript (H_n)$ \cite{gud111,gud112} . We consider a QSGP as a model for discrete quantum gravity. It is hoped that additional theoretical principles or experimental data will help determine the coupling constants and hence $\nu _c$, which is the classical part of the process, and also the $\rho _n\in\qscript (H_n)$, which is the quantum part. Moreover, it is believed that general relativity will eventually be shown to be a close approximation to this discrete model. Until now it has not been clear how this can be accomplished. However, in Section~3 we shall derive a discrete Einstein equation which might be useful in performing these tasks.

\section{Difference Operators} 

Let $Q_n={\displaystyle\bigcup _{j=1}^n}\pscript _j$ be the collection of causets with cardinality not greater than $n$ and let $K_n$ be the (finite-dimensional) Hilbert space $\complex ^{Q_n}$ with the standard inner product
\begin{equation*}
\elbows{f,g}=\sum _{x\in Q_n}\overline{f(x)}g(x)
\end{equation*}
Let $L_n=K_n\otimes K_n$ which we identify with $\complex ^{Q_n\times Q_n}$ having the standard inner product. We shall also have need to consider the Hilbert space
\begin{equation*}
K=\brac{f\in\complex ^{\pscript}\colon\sum _{x\in\pscript}\ab{f(x)}^2<\infty}
\end{equation*}
with the standard inner product and we define $L=K\otimes K$. Notice that $K_1\subseteq K_2\subseteq\cdots K$ form an increasing sequence of subspaces of $K$ that generate $K$ in the natural way.

Let $\rho _n\in\qscript (H_n)$ be a QSGP with corresponding decoherence matrix $D_n(\omega ,\omega ')$,
$\omega ,\omega '\in\Omega _n$. If $\omega =\omega _1\omega _2\cdots\omega _n\in \Omega _n$ and $\omega _j=x$ for some $j$, we say that $\omega$ \textit{goes through} $x$. For $x,y\in Q_n$ we define
\begin{equation*}
D_n(x,y)=\sum\brac{D_n(\omega ,\omega ')\colon\omega\hbox{ goes through } x,\ \omega '\hbox{ goes through }y}
\end{equation*}
Due to the consistency of $\rho _n$, $D_n(x,y)$ is independent of $n$. That is, $D_n(x,y)\!=\!D_m(x,y)$ if
$x,y\in Q_n\cap Q_m$. Moreover, $D_n(x,y)$ are the components of a positive semidefinite matrix. We view $Q_n$ as the analogue of a differentiable manifold and $D_n(x,y)$ as the analogue of a metric tensor. One might think that the elements of causets should be analogous to points of a differential manifold and not the causets themselves. However, if $x\in Q_n$, then $x$ is intimately related to its producers, each of which determines a unique $a\in x$. Moreover, if $y\to x$ we view
$(y,x)$ as a tangent vector at $x$. In this way, there are as many tangent vectors at $x$ as there are producers of $x$. Finally, the elements of $\Omega _n$ are analogues of curves and the elements of $K_n$ are analogues of smooth functions on a manifold.

For $x\in Q_n$, $\ab{x}$ denotes the cardinality of $x$. Notice if
$\omega =\omega _1\omega _2\cdots\omega _n\in\Omega _n$ and $\omega _j=x$, then $j=\ab{x}$ and $\omega$ goes through $x$ if and only if $\omega _{\ab{x}}=x$. We see that a path $\omega$ through $x$ determines a tangent vector
$(\omega _{\ab{x}-1},\omega _{\ab{x}})$ at $x$ (assuming that $\ab{x}\ge 2$). For $\omega\in\Omega _n$ we define the \textit{difference operator} $\varbigtriangleup _\omega ^n$ on $K_n$ by
\begin{equation*}
\varbigtriangleup _\omega ^nf(x)=\sqbrac{f(x)-f(\omega _{\ab{x}-1})}\delta _{x,\omega _{\ab{x}}}
\end{equation*}
for all $f\in K_n$, where $\delta _{x,\omega _{\ab{x}}}$ is the Kronecker delta. Thus, $\varbigtriangleup _\omega ^nf(x)$ gives the change of $f$ along the tangent vector $(\omega _{\ab{x}-1},\omega _{\ab{x}})$ if $\omega$ goes through $x$. It is clear that $\varbigtriangleup _\omega ^n$ is a linear operator on $K_n$. We now show that $\varbigtriangleup _\omega ^n$ satisfies a discrete form of Leibnitz's rule. For $f,g\in K_n$ we have
\begin{align}             
\label{eq21}
\varbigtriangleup _\omega ^nfg(x)
  &=\sqbrac{f(x)g(x)-f(\omega _{\ab{x}-1})g(\omega _{\ab{x}-1})}\delta _{x,\omega _{\ab{x}}}\notag\\
  &=\brac{\sqbrac{f(x)g(x)\!-\!f(x)g(\omega _{\ab{x}-1})}
  \!\!+\!\!\sqbrac{f(x)g(\omega _{\ab{x}-1})\!-\!f(\omega _{\ab{x}-1})g(\omega _{\ab{x}-1})}}\notag\\
  &\qquad\ctimes\delta _{x,\omega _{\ab{x}}}\notag\\
  &=f(x)\varbigtriangleup _\omega ^ng(x)+g(\omega _{\ab{x}-1})\varbigtriangleup _\omega ^nf(x)
\end{align}
Of course, it also follows that
\begin{equation}         
\label{eq22}
\varbigtriangleup _\omega ^nfg(x)=\varbigtriangleup _\omega ^ngf(x)
  =g(x)\varbigtriangleup _\omega ^nf(x)+f(\omega _{\ab{x}-1})\varbigtriangleup _\omega ^ng(x)
\end{equation}
Given a function of two variables $f\in\complex ^{Q_n\times Q_n}=L_n$ we have a function $\ftilde\in K_n$ of one variable where $\ftilde (x)=f(x,x)$ and given a function $f\in K_n$ we have the functions of two variables $f_1,f_2\in L_n$ where
$f_1(x,y)=f(x)$ and $f_2(x,y)=f(y)$ for all $x,y\in Q_n$. For $\omega ,\omega '\in\Omega _n$ we want an operator
$\varbigtriangleup _{\omega ,\omega '}^n\colon L_n\to L_n$ that extends $\varbigtriangleup _\omega ^n$ and satisfies a discrete Leibnitz's rule. That is,
\begin{align}           
\label{eq23}
\varbigtriangleup _{\omega ,\omega '}^nf_1(x,y)
  &=\varbigtriangleup _\omega ^nf(x)\delta _{y,\omega '_{\ab{x}}},
  \varbigtriangleup _{\omega ,\omega '}^nf_2(x,y)
  =\varbigtriangleup _{\omega '}^nf(y)\delta _{x,\omega _{\ab{x}}}\\
\intertext{and}
\label{eq24}
\varbigtriangleup _{\omega ,\omega '}^nfg(x,y)
  &=f(x,y)\varbigtriangleup _{\omega ,\omega '}g(x,y)+g(\omega _{\ab{x}-1},\omega '_{\ab{x}-1})
  \varbigtriangleup _{\omega ,\omega '}^nf(x,y)
\end{align}

\begin{thm}       
\label{thm21}
A linear operator $\varbigtriangleup _{\omega ,\omega '}^n\colon L_n\to L_n$ satisfies \eqref{eq23} and \eqref{eq24} if and only if $\varbigtriangleup _{\omega ,\omega '}^n$ has the form
\begin{equation}         
\label{eq25}
\varbigtriangleup _{\omega ,\omega '}^nf(x,y)=\sqbrac{f(x,y)-f(\omega _{\ab{x}-1},\omega '_{\ab{y}-1})}
  \delta _{x,\omega _{\ab{x}}}\delta _{y,\omega '_{\ab{y}}}
\end{equation}
\end{thm}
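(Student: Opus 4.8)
The statement is an equivalence; the ``only if'' (necessity) direction carries the content, so I treat it last.

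\emph{Sufficiency.} First I verify that the operator defined by \eqref{eq25} satisfies \eqref{eq23} and \eqref{eq24}. For \eqref{eq23}, substituting $f_1$ into \eqref{eq25} and noting that $f_1(\omega_{\ab{x}-1},\omega'_{\ab{y}-1})=f(\omega_{\ab{x}-1})$ depends on the first slot only, \eqref{eq25} collapses to $\sqbrac{f(x)-f(\omega_{\ab{x}-1})}\delta_{x,\omega_{\ab{x}}}=\varbigtriangleup_\omega^nf(x)$ times a Kronecker factor localizing $y$ on $\omega'$; the case of $f_2$ is symmetric, since $f_2(\omega_{\ab{x}-1},\omega'_{\ab{y}-1})=f(\omega'_{\ab{y}-1})$ and the operator then collapses to $\varbigtriangleup_{\omega'}^nf(y)$ times $\delta_{x,\omega_{\ab{x}}}$. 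For \eqref{eq24}, I use the algebraic identity $AB-A'B'=A(B-B')+B'(A-A')$ with $A=f(x,y)$, $A'=f(\omega_{\ab{x}-1},\omega'_{\ab{y}-1})$, $B=g(x,y)$, $B'=g(\omega_{\ab{x}-1},\omega'_{\ab{y}-1})$; the common factor $\delta_{x,\omega_{\ab{x}}}\delta_{y,\omega'_{\ab{y}}}$ simply rides along, and \eqref{eq24} comes out verbatim. This is the two-variable mirror of the computation \eqref{eq21}.

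\emph{Necessity.} Suppose $\varbigtriangleup_{\omega,\omega'}^n$ is linear and satisfies \eqref{eq23} and \eqref{eq24}. Because $L_n=K_n\otimes K_n$, it is spanned by the rank-one functions $g_1h_2$, i.e.\ $(x,y)\mapsto g(x)h(y)$ with $g,h\in K_n$ (indeed, running $g,h$ through the standard basis of $K_n$ already yields a basis of $L_n$). By linearity it therefore suffices to pin down $\varbigtriangleup_{\omega,\omega'}^n(g_1h_2)$. I would apply the Leibnitz rule \eqref{eq24} with $f=g_1$ and $g=h_2$, then evaluate the two resulting terms $\varbigtriangleup_{\omega,\omega'}^ng_1$ and $\varbigtriangleup_{\omega,\omega'}^nh_2$ using \eqref{eq23}, which rewrites them through the one-variable operators $\varbigtriangleup_\omega^ng$ and $\varbigtriangleup_{\omega'}^nh$. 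Expanding and regrouping the Kronecker deltas identifies the outcome with \eqref{eq25} evaluated at $g_1h_2$; linearity then forces $\varbigtriangleup_{\omega,\omega'}^n$ to agree with the operator \eqref{eq25} on all of $L_n$. (Repeating the computation with the roles of $g_1$ and $h_2$ in \eqref{eq24} interchanged must give the same answer, which is a built-in consistency check on the hypotheses.)

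\emph{Main obstacle.} The algebra is routine once the Kronecker bookkeeping in the necessity step is handled with care. When \eqref{eq23} introduces a ``mixed-level'' factor relating $y$ to $\omega'$ at level $\ab{x}$, one must use that the relevant entries of $\omega'$ lie in $\pscript_{\ab{x}}$ and $\pscript_{\ab{y}}$ respectively, so that in any surviving product of deltas $\ab{x}=\ab{y}$ is forced and the mixed factor reduces to $\delta_{y,\omega'_{\ab{y}}}$; one must likewise keep the shifted indices $\omega_{\ab{x}-1}$ and $\omega'_{\ab{y}-1}$ straight and confirm that cross-terms evaluating $h$ at the ``wrong'' level vanish against their delta factors. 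With these support identifications in hand, the regrouping that produces \eqref{eq25} is mechanical.
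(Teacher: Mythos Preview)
Your proof is correct and follows essentially the same route as the paper: sufficiency via direct substitution together with the identity $AB-A'B'=A(B-B')+B'(A-A')$, and necessity by applying \eqref{eq24} then \eqref{eq23} to product functions $g_1h_2$ and extending by linearity over $L_n=K_n\otimes K_n$. Your ``main obstacle'' paragraph is more careful than the paper about the $\ab{x}$ versus $\ab{y}$ subscripts appearing in \eqref{eq23}--\eqref{eq24}; the paper's own proof silently writes $\delta _{y,\omega '_{\ab{y}}}$ and $h(\omega '_{\ab{y}-1})$ in the computation, in effect treating the literal $\ab{x}$ subscripts there as typographical slips rather than something to be reconciled.
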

\begin{proof}
If $\varbigtriangleup _{\omega ,\omega '}^n$ is defined by \eqref{eq25}, then for $f\in K_n$ we have
\begin{align*}
\varbigtriangleup _{\omega ,\omega '}^nf_1(x,y)&=\sqbrac{f_1(x,y)-f_1(\omega _{\ab{x}-1},\omega '_{\ab{y}-1})}
  \delta _{x,\omega _{\ab{x}}}\delta _{y,\omega '_{\ab{y}}}\\
  &=\sqbrac{f(x)-f(\omega _{\ab{x}-1})}\delta _{x,\omega _{\ab{x}}}\delta _{y,\omega '_{\ab{y}}}\\
  &=\varbigtriangleup _\omega ^nf(x)\delta _{y,\omega '_{\ab{y}}}
\end{align*}
In a similar way, $\varbigtriangleup _{\omega ,\omega '}^n$ satisfies the second equation in \eqref{eq23}. Moreover, we have
\begin{align*}
\varbigtriangleup _{\omega ,\omega '}^nf(x,y)
  &=\sqbrac{f(x,y)g(x,y)-f(\omega _{\ab{x}-1}\omega '_{\ab{y}-1})g(\omega _{\ab{x}-1}\omega '_{\ab{y}-1})}
  \delta _{x,\omega _{\ab{x}}}\delta _{y,\omega '_{\ab{y}}}\\
  &=\sqbrac{f(x,y)g(x,y)-f(x,y)g(\omega _{\ab{x}-1},\omega '_{\ab{y}-1})}
  \delta _{x,\omega _{\ab{x}}}\delta _{y,\omega '_{\ab{y}}}\\
  &\quad +\sqbrac{f(x,y)g(\omega _{\ab{x}-1},\omega '_{\ab{y}-1})-f(\omega _{\ab{x}-1}\omega '_{\ab{y}-1})
  g(\omega _{\ab{x}-1}\omega '_{\ab{y}-1})}\\
  &\qquad\ctimes\delta _{x,\omega _{\ab{x}}}\delta _{y,\omega '{\ab{y}}}\\
  &=f(x,y)\varbigtriangleup _{\omega ,\omega '}^ng(x,y)+g(\omega _{\ab{x}-1},\omega '_{\ab{y}-1})
  \varbigtriangleup _{\omega ,\omega '}^nf(x,y)
\end{align*}
Conversely, suppose the linear operator $\varbigtriangleup _{\omega ,\omega '}^n\colon L_n\to L_n$ satisfies \eqref{eq23} and \eqref{eq24}. If $f\in L_n$ has the form $f(x,y)=g(x)h(y)$, then
\begin{align*}
\varbigtriangleup _{\omega ,\omega '}^nf(x,y)
  &=\varbigtriangleup _{\omega ,\omega '}^ngh(x,y)=\varbigtriangleup _{\omega ,\omega '}^ng_1h_2(x,y)\\
  &=g_1(x,y)\varbigtriangleup _{\omega ,\omega '}^n
  h_2(x,y)+h_2(\omega _{\ab{x}-1}\omega '_{\ab{y}-1})\varbigtriangleup _{\omega ,\omega '}^ng_1(x,y)\\
  &=g(x)\varbigtriangleup _{\omega '}^nh(y)\delta _{x,\omega _{\ab{x}}}+h(\omega '_{\ab{y}-1})
  \varbigtriangleup _\omega ^ng(x)\delta _{y,\omega '_{\ab{y}}}\\
  &=\brac{g(x)\sqbrac{h(y)-h(\omega '_{\ab{y}-1})}+h(\omega '_{\ab{y}-1})\sqbrac{g(x)-g(\omega _{\ab{y}-1})}}\\
  &\qquad\ctimes\delta _{x,\omega _{\ab{x}}}\delta _{y,\omega '{\ab{y}}}\\
  &=\sqbrac{g(x)h(y)-g(\omega _{\ab{x}-1})h(\omega '_{\ab{y}-1})}\delta _{x,\omega _{\ab{x}}}\delta _{y,\omega '_{\ab{y}}}\\
  &=\sqbrac{f(x,y)-f(\omega _{\ab{x}-1}\omega '_{\ab{y}-1})}\delta _{x,\omega _{\ab{x}}}\delta _{y,\omega '_{\ab{y}}}
\end{align*}
Since $\varbigtriangleup _{\omega ,\omega '}^n$, is linear and every element of $L_n$ is a linear combination of product functions, the result follows.
\end{proof}

Of course, Theorem~\ref{thm21} is not surprising because \eqref{eq25} is the natural extension of
$\varbigtriangleup _\omega ^n$ to $L_n$. Also $\varbigtriangleup _{\omega ,\omega '}^n$ extends
$\varbigtriangleup _\omega ^n$ in the sense that for any $f\in L_n$ we have 
\begin{align*}
\varbigtriangleup _{\omega ,\omega}^nf(x,x)
  &=\sqbrac{f(x,y)-f(\omega _{\ab{x}-1},\omega _{\ab{x}-1})}\delta _{x,\omega _{\ab{x}}}\\
  &=\sqbrac{\ftilde (x)-\ftilde (\omega _{\ab{x}-1})}\delta _{x,\omega _{\ab{x}}}=\varbigtriangleup _\omega ^n\ftilde (x)
\end{align*}
As before, $\varbigtriangleup _{\omega ,\omega '}^n$ satisfies
\begin{align*}
\varbigtriangleup _{\omega ,\omega '}^nfg(x,y)&=\varbigtriangleup _{\omega ,\omega '}^ngf(x,y)\\
  &=g(x,y)\varbigtriangleup _{\omega ,\omega '}^nf(x,y)+f(\omega _{\ab{x}-1},\omega '_{\ab{y}-1})
  \varbigtriangleup _{\omega ,\omega '}^ng(x,y)
\end{align*}
The next result characterizes $\varbigtriangleup _\omega ^n$ and $\varbigtriangleup _{\omega ,\omega '}^n$ up to a multiplicative constant.

\begin{thm}       
\label{thm22}
{\rm (a)}\enspace An operator $T_\omega\colon K_n\to K_n$ satisfies \eqref{eq21} and $T_\omega f(x)=0$ if
$\omega _{\ab{x}}\ne x$ if and only if there exists a function $\beta _\omega\in K_n$ such that
$T_\omega =\beta _\omega\varbigtriangleup _\omega ^n$.
{\rm (b)}\enspace An operator $T_{\omega ,\omega '}\colon L_n\to L_n$ satisfies \eqref{eq24} and
$T_{\omega ,\omega '}f(x,y)=0$ if $\omega _{\ab{x}}\ne x$ or $\omega '_{\ab{y}}\ne y$ if and only if there exists a function
$\beta _{\omega ,\omega '}\in L_n$ such that
$T_{\omega ,\omega '}=\beta _{\omega ,\omega '}\varbigtriangleup _{\omega ,\omega '}^n$.
\end{thm}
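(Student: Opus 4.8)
The plan is to reduce both parts to the action of the given operator on \emph{indicator functions} and then feed squares and products of those into the twisted Leibnitz rule. Write $e_z\in K_n$ for the indicator of a singleton $\brac{z}\subseteq Q_n$; the elementary facts $e_z^2=e_z$ and $e_ze_{z'}=0$ for $z\ne z'$ turn \eqref{eq21} (resp.\ \eqref{eq24}) into a small system of linear constraints that pins down the numbers $T_\omega e_z(x)$ completely. One direction is immediate: if $T_\omega=\beta_\omega\varbigtriangleup_\omega^n$ (and similarly in (b)), then since $\varbigtriangleup_\omega^n$ itself satisfies \eqref{eq21}, multiplying that identity through by $\beta_\omega(x)$ shows $\beta_\omega\varbigtriangleup_\omega^n$ does too, while $\beta_\omega(x)\varbigtriangleup_\omega^n f(x)$ carries the factor $\delta_{x,\omega_{\ab{x}}}$ and hence vanishes when $\omega_{\ab{x}}\ne x$.

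For the converse of (a), fix $x\in Q_n$. If $\omega_{\ab{x}}\ne x$ then $T_\omega f(x)=0=\varbigtriangleup_\omega^n f(x)$ for all $f$, so $T_\omega=\beta_\omega\varbigtriangleup_\omega^n$ holds at $x$ for any value of $\beta_\omega(x)$ (take $0$). So suppose $\omega_{\ab{x}}=x$ and, deferring the bottom stratum $\ab{x}=1$, set $x':=\omega_{\ab{x}-1}$, noting $x'\ne x$ since these sit at consecutive cardinalities along $\omega$. First I would apply \eqref{eq21} to $f=g=e_z$, obtaining $T_\omega e_z(x)=\paren{e_z(x)+e_z(x')}T_\omega e_z(x)=(\delta_{z,x}+\delta_{z,x'})T_\omega e_z(x)$; since $x\ne x'$ the coefficient equals $1$ on $\brac{x,x'}$ and $0$ elsewhere, forcing $T_\omega e_z(x)=0$ for every $z\notin\brac{x,x'}$. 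By linearity and finiteness of $Q_n$ this already gives $T_\omega f(x)=f(x)T_\omega e_x(x)+f(x')T_\omega e_{x'}(x)$. Next I would apply \eqref{eq21} to $f=e_x$, $g=e_{x'}$: the left side is $T_\omega(e_xe_{x'})(x)=0$, while the right side equals $e_x(x)T_\omega e_{x'}(x)+e_{x'}(x')T_\omega e_x(x)=T_\omega e_{x'}(x)+T_\omega e_x(x)$, so $T_\omega e_{x'}(x)=-T_\omega e_x(x)$. Combining, $T_\omega f(x)=\paren{f(x)-f(x')}T_\omega e_x(x)$; setting $\beta_\omega(x):=T_\omega e_x(x)$ then gives $T_\omega f(x)=\beta_\omega(x)\sqbrac{f(x)-f(\omega_{\ab{x}-1})}\delta_{x,\omega_{\ab{x}}}=\beta_\omega(x)\varbigtriangleup_\omega^n f(x)$.

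Part (b) runs the identical script with the pair $(x,y)$ replacing $x$, with $(x',y'):=(\omega_{\ab{x}-1},\omega'_{\ab{y}-1})$ replacing $x'$, and with singleton indicators $e_{(z,w)}$ on $Q_n\times Q_n$; the only step needing a second look is the coefficient delivered by \eqref{eq24} from $e_{(z,w)}^2=e_{(z,w)}$, namely $\delta_{z,x}\delta_{w,y}+\delta_{z,x'}\delta_{w,y'}$, which one checks equals $1$ exactly when $(z,w)\in\brac{(x,y),(x',y')}$ and $0$ otherwise --- for a partial match such as $z=x$, $w\ne y$ the first $\delta$-product vanishes because $w\ne y$ and the second because $z=x\ne x'$. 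Everything else is verbatim, and $\beta_{\omega,\omega'}(x,y):=T_{\omega,\omega'}e_{(x,y)}(x,y)$ together with formula \eqref{eq25} of Theorem~\ref{thm21} yields $T_{\omega,\omega'}=\beta_{\omega,\omega'}\varbigtriangleup_{\omega,\omega'}^n$. I expect no conceptual obstacle: the only care needed is this $\delta$-bookkeeping and the cardinality-one stratum --- where $\omega_0$ is undefined and $\varbigtriangleup_\omega^n$ carries no tangent vector --- which the same indicator computation settles once the paper's convention for $\varbigtriangleup_\omega^n$ on that stratum is fixed.
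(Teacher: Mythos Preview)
Your argument is correct.  The paper takes a somewhat shorter route that never invokes linearity of $T_\omega$: since $fg=gf$, writing \eqref{eq21} with the roles of $f$ and $g$ exchanged and equating the two expressions gives
\[
\sqbrac{g(x)-g(\omega_{\ab{x}-1})}T_\omega f(x)=\sqbrac{f(x)-f(\omega_{\ab{x}-1})}T_\omega g(x),
\]
so fixing any $g$ with $g(x)\ne g(\omega_{\ab{x}-1})$ lets one read off $T_\omega f(x)=\beta_\omega(x)\sqbrac{f(x)-f(\omega_{\ab{x}-1})}$ in one stroke, with $\beta_\omega(x)=T_\omega g(x)/\paren{g(x)-g(\omega_{\ab{x}-1})}$.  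Your indicator computation recovers exactly the same coefficient (take $g=e_x$ in the paper's formula to get $\beta_\omega(x)=T_\omega e_x(x)$), is more explicit about off-path points and the $\ab{x}=1$ stratum, and makes the matrix elements visible; the cost is the step ``by linearity and finiteness of $Q_n$'', which the paper's symmetry trick bypasses---indeed the paper's argument shows \emph{a posteriori} that any $T_\omega$ satisfying the hypotheses is automatically linear.
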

\begin{proof}
If $T_\omega$ satisfies \eqref{eq21}, it follows from \eqref{eq22} that
\begin{equation*}
f(x)T_\omega g(x)+g(\omega _{\ab{x}-1})T_\omega f(x)=g(x)T_\omega f(\omega )+f(\omega _{\ab{x}-1})T_\omega g(x)
\end{equation*}
Hence,
\begin{equation*}
\sqbrac{g(x)-g(\omega _{\ab{x}-1})}T_\omega f(x)=T_\omega g(x)\sqbrac{f(x)-f(\omega _{\ab{x}-1})}
\end{equation*}
Therefore, if $g(x)-g(\omega _{\ab{x}-1})\ne 0$, we have
\begin{equation*}
T_\omega f(x)=\frac{T_\omega g(x)}{g(x)-g(\omega _{\ab{x}-1})}\sqbrac{f(x)-f(\omega _{\ab{x}-1})}
\end{equation*}
Letting
\begin{equation*}
\beta _\omega (x)=\frac{T_\omega g(x)}{g(x)-g(\omega _{\ab{x}-1})}
\end{equation*}
gives the result. The converse is straightforward. The proof of (b) is similar.
\end{proof}

It is clear that $\mu _n(x)=D_n(x,x)$ is not stationary; that is $\varbigtriangleup _\omega ^n\mu _n(x)\ne 0$ for all $x\in Q_n$ in general. Is there a function $\alpha _\omega\in K_n$ such that
$(\varbigtriangleup _\omega ^n+\alpha _\omega )\mu _n(x)=0$ for all $x\in Q_n$? If $\alpha _\omega$ exists, we obtain
\begin{equation*}
\sqbrac{\mu _n(x)-\mu _n(\omega _{\ab{x}-1})}\delta _{x,\omega _{\ab{x}}}+\alpha _\omega (x)\mu _n(x)=0
\end{equation*}
If $\omega _{\ab{x}}=x$ and $\mu _n(x)=0$, this would imply that $\mu _n(\omega _{\ab{x}-1})=0$. Continuing this process would give
\begin{equation*}
\mu _n(\omega _{\ab{x}-2})=\mu (\omega _{\ab{x}-3})=\cdots =0
\end{equation*}
which leads to a contradiction. It is entirely possible for $\mu _n(x)$ to be zero for some $x\in Q_n$ so we abandon this attempt. How about functions $\alpha _\omega ,\beta _\omega\in K_n$ such that
$(\beta _\omega\varbigtriangleup _\omega ^n+\alpha _\omega )\mu _n(x)=0$ for all $x\in Q_n$? We then obtain
\begin{equation}         
\label{eq26}
\beta _\omega (x)\sqbrac{\mu _n(x)-\mu _n(\omega _{\ab{x}-1})}
  \delta _{x,\omega _{\ab{x}}}+\alpha _\omega (x)\mu _n(x)=0
\end{equation}
If $\omega _{\ab{x}}=x$ and $\mu _n(x)=0$ but $\beta _\omega (x)\ne 0$ we obtain the same contradiction as before. We conclude that $\beta _\omega (x)=0$ whenever $\mu _n(x)=0$. The simplest choice of such a $\beta _\omega$ is
$\beta _\omega (x)=\mu _n(x)$. This choice also has the advantage of being independent of $\omega$.
Equation~\eqref{eq26} becomes
\begin{equation}         
\label{eq27}
\mu _n(x)\sqbrac{\mu _n(x)-\mu _n(\omega _{\ab{x}-1})}\delta _{x,\omega _{\ab{x}}}+\alpha _\omega (x)\mu _n(x)=0
\end{equation}
If $\mu _n(x)=0$, then \eqref{eq27} holds. If $\mu _n(x)\ne 0$, then we obtain
\begin{equation*} 
\alpha _\omega (x)=\sqbrac{\mu _n(\omega _{\ab{x}-1})-\mu _n(x)}
  \delta _{x,\omega _{\ab{x}}}=-\varbigtriangleup _\omega ^n\mu _n(x)
\end{equation*}
The numbers $\alpha _\omega (x)$ are an analogue of the Christoffel symbols. We call
$\varbigtriangledown _\omega ^n=\mu _n\varbigtriangleup _\omega ^n+\alpha _\omega$ the 
\textit{covariant difference operator}. The operator $\varbigtriangledown _\omega ^n$ is not a difference operator in the usual sense because $\varbigtriangledown _\omega ^n1\ne 0$. Instead, we have
$\varbigtriangledown _\omega ^n1=\alpha _\omega$.

Following the previous steps for $\varbigtriangleup _{\omega ,\omega '}^n$ we define the
\textit{covariant bidifference operator}
$\varbigtriangledown _{\omega ,\omega '}^n=D_n\varbigtriangleup _{\omega ,\omega '}^n+\alpha _{\omega ,\omega '}$ where
\begin{equation*} 
\alpha _{\omega ,\omega '}(x,y)=\sqbrac{D_n(\omega _{\ab{x}-1},\omega '_{\ab{y}-1})-D_n(x,y)}
  \delta _{x,\omega _{\ab{x}}}\delta _{y,\omega '_{\ab{y}}}
\end{equation*}
and again, $\alpha _{\omega ,\omega '}(x,y)$ are analogous to Christoffel symbols. Notice that
$\varbigtriangledown _{\omega ,\omega '}^nD_n(x,y)=0$ for all $x,y\in Q_n$ and
$\varbigtriangledown _{\omega ,\omega}^nf(x,x)=\varbigtriangledown _\omega ^n\ftilde (x)$. Complete expressions for
$\varbigtriangledown _\omega ^n$ and $\varbigtriangledown _{\omega ,\omega '}^n$ are
\begin{align}           
\label{eq28}
\varbigtriangledown _\omega ^nf(x)&=\sqbrac{\mu _n(\omega _{\ab{x}-1})f(x)-\mu _n(x)f(\omega _{\ab{x}-1})}
  \delta _{x,\omega _{\ab{x}}}\\
\intertext{and}
\label{eq29}
\varbigtriangledown _{\omega ,\omega '}^nf(x,y)&=\sqbrac{D_n(\omega _{\ab{x}-1},\omega '_{\ab{y}-1})f(x,y)
  -D_n(x,y)f(\omega _{\ab{x}-1},\omega '_{\ab{y}-1})}\notag\\
  &\qquad\ctimes\delta_{x,\omega _{\ab{x}}}\delta _{y,\omega '_{\ab{y}}}
\end{align}
The form of \eqref{eq28} and \eqref{eq29} shows that $\varbigtriangledown _\omega ^n$ and
$\varbigtriangledown _{\omega ,\omega '}^n$ are ``weighted'' difference operators.

\section{Curvature Operators} 
The linear operator $\rscript _{\omega ,\omega '}^n\colon L_n\to L_n$ defined by
\begin{equation*}
\rscript _{\omega ,\omega '}^n=\varbigtriangledown _{\omega ,\omega '}^n-\varbigtriangledown _{\omega ',\omega}^n
\end{equation*}
is called the \textit{curvature operator}. Applying \eqref{eq29} we have
\begin{align}         
\label{eq31}
\rscript&_{\omega ,\omega '}^nf(x,y)\notag\\
  &=D_n(x,y)\sqbrac{f(\omega '_{\ab{x}-1},\omega _{\ab{y}-1})
  \delta _{x,\omega '_{\ab{x}}}\delta _{y,\omega _{\ab{y}}}-f(\omega _{\ab{x}-1},\omega '_{\ab{y}-1})
   \delta _{x,\omega _{\ab{x}}}\delta _{y,\omega '_{\ab{y}}}}\notag\\
   &\quad +\sqbrac{D_n(\omega _{\ab{x}-1},\omega '_{\ab{y}-1})\delta _{x,\omega _{\ab{x}}}\delta _{y,\omega '_{\ab{y}}}
   -D_n(\omega '_{\ab{x}-1},\omega _{\ab{y}-1})\delta _{x,\omega '_{\ab{x}}}\delta _{y,\omega _{\ab{y}}}}f(x,y)
\end{align}
If $x=y$, then \eqref{eq31} reduces to
\begin{align*}
\rscript _{\omega ,\omega '}^nf(x,x)&=\mu _n(x)\sqbrac{f(\omega '_{\ab{x}-1},\omega _{\ab{x}-1})
  -f(\omega _{\ab{x}-1},\omega '_{\ab{x}-1})}\delta _{x,\omega _{\ab{x}}}\\
  &\quad +2i\,\rmim D_n(\omega _{\ab{x}-1},\omega '_{\ab{x}-1})f(x,x)\delta _{x,\omega _{\ab{x}}}
\end{align*}
We call the operator $\dscript _{\omega ,\omega '}^n\colon L_n\to L_n$ given by
\begin{align*}
\dscript &_{\omega ,\omega '}^nf(x,y)\\
  &=D_n(x,y)\sqbrac{f(\omega '_{\ab{x}-1},\omega _{\ab{y}-1})\delta _{x,\omega '_{\ab{x}}}\delta _{y,\omega _{\ab{y}}}
  -f(\omega _{\ab{x}-1},\omega '_{\ab{y}-1})\delta _{x,\omega _{\ab{x}}}\delta _{y,\omega '_{\ab{y}}}}
\end{align*}
the \textit{metric operator} and the operator $\tscript _{\omega ,\omega '}^n\colon L_n\to L_n$ given by
\begin{align*}
&\tscript _{\omega ,\omega '}^nf(x,y)\\
  &=\sqbrac{D_n(\omega _{\ab{x}-1},\omega '_{\ab{y}-1})\delta _{x,\omega _{\ab{x}}}\delta _{y,\omega '_{\ab{y}}}
  -D_n(\omega '_{\ab{x}-1},\omega _{\ab{y}-1})\delta _{x,\omega _{\ab{x}}}\delta _{y,\omega '_{\ab{y}}}}f(x,y)
\end{align*}
the \textit{mass-energy operator}. Then \eqref{eq31} gives
\begin{equation}         
\label{eq32}
\rscript _{\omega ,\omega '}^n=\dscript _{\omega ,\omega '}^n+\tscript _{\omega ,\omega '}^n
\end{equation}

Equation \eqref{eq32} is a discrete analogue of Einstein's equation \cite{wal84}. In this sense, Einstein's equation always holds in the present framework no matter what we have for the quantum dynamics $\rho _n$. One might argue that we obtained this discrete analogue of Einstein's equation just by definition. However, $\rscript _{\omega ,\omega '}^n$ is a reasonable counterpart of the curvature tensor in general relativity \cite{wal84} and $\dscript _{\omega ,\omega '}^n$ is certainly a counterpart of the metric tensor.

Equation \eqref{eq32} does not give direct information about $D_n(x,y)$ and\newline
$D_n(\omega ,\omega ')$ (which are, after all, what we want to find), but it may give useful indirect information. If we can find $D_n(\omega ,\omega ')$ such that the classical Einstein equation is an approximation to \eqref{eq32}, then this would give information about
$D_n(\omega ,\omega ')$. Moreover, an important problem in discrete quantum gravity theory is how to test whether general relativity is a close approximation to the theory. Whether Einstein's equation is an approximation to \eqref{eq32} would provide such a test. Another variant of a discrete Einstein equation can be obtained by defining the operator
$\rscript _{x,y}^n$ for $x,y\in Q_n$ by
\begin{equation*} 
\rscript _{x,y}^n=\sum\brac{\rscript _{\omega ,\omega '}^n\colon\omega _{\ab{x}}=x,\omega '_{\ab{y}}=y}
\end{equation*}
With similar definitions for $\dscript _{x,y}^n$ and $\tscript _{x,y}^n$ we obtain
\begin{equation*} 
\rscript _{x,y}^n=\dscript _{x,y}^n+\tscript _{x,y}^n
\end{equation*}
In order to consider approximations by Einstein's equation, it may be necessary to let $n\to\infty$ in \eqref{eq32}. However, the convergence of the operators depends on $D_n$ and will be left to a later paper. In a similar vein, it may be possible that limit operators $\rscript _{\omega ,\omega '}$, $\dscript _{\omega ,\omega '}$ and $\tscript _{\omega ,\omega '}$ can be defined as (possibly unbounded) operators directly on the Hilbert space $L$.

\section{Matrix Elements} 
We have introduced several operators on $K_n$ and $L_n$ in Sections~2 and 3. In order to understand such operators more directly, it is frequently useful to write them in terms of their matrix elements. First we denote the standard basis on
$K_n$ by $e_x^n,x\in Q_n$. The matrix that is zero except for a one in the $xy$ entry is denoted by
$\ket{e_x^n}\bra{e_y^n}$ and we call this the $xy$ \textit{matrix element}, $x,y\in Q_n$. Of course, in Dirac notation,
$\ket{e_x^n}\bra{e_y^n}$ can be considered directly as a linear operator without referring to a matrix. In any case, every linear operator $T$ on $K_n$ can be represented uniquely as
\begin{equation*} 
T=\sum _{x,y\in Q_n}t_{x,y}\ket{e_x^n}\bra{e_y^n}
\end{equation*}
for $t_{x,y}\in\complex$. In a similar way, $e_x^n\otimes e_y^n$, $x,y\in Q_n$ form an orthonormal basis for
$L_n=K_n\otimes K_n$ and every linear operator $T$ on $L_n$ has a unique representation
\begin{equation*} 
T=\sum\brac{t_{x,y;x',y'}\ket{e_x^n\otimes e_y^n}\bra{e_{x'}^n\otimes e_{y'}^n}\colon x,y,x',y'\in Q_n}
\end{equation*}

\begin{thm}       
\label{thm41}
If $\omega =\omega _1\omega _2\cdots\omega _n\in\Omega _n$, then
\begin{align*}
\varbigtriangleup _\omega ^n
  &=\sum _{j=1}^n\ket{e_{\omega _j}^n}\paren{\bra{e_{\omega _j}^n}-\bra{e_{\omega _{j-1}}^n}}
\intertext{and}
\varbigtriangledown _\omega ^n&=\sum _{j=1}^n\ket{e_{\omega _j}^n}
  \sqbrac{\mu _n(\omega _{j-1})\bra{e_{\omega _j}^n}-\mu _n(\omega _j)\bra{e_{\omega _{j-1}}^n}}
\end{align*}
where we use the conventions $\mu _n(\omega _0)=e_{\omega _0}^n=e_{\omega _{n+1}}^n=0$.
\end{thm}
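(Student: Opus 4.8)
The plan is to verify both formulas by testing each against an arbitrary basis vector $e_y^n$ of $K_n$; since a linear operator on the finite-dimensional space $K_n$ is determined by its values on $\brac{e_y^n\colon y\in Q_n}$, this is enough. Before starting I would record two bookkeeping facts. First, because $\omega =\omega _1\cdots\omega _n\in\Omega _n$ has $\omega _j\in\pscript _j$, we have $\ab{\omega _j}=j$, so $\omega _1,\ldots ,\omega _n$ are pairwise distinct causets and $e_{\omega _1}^n,\ldots ,e_{\omega _n}^n$ are distinct members of the standard basis; consequently $\elbows{e_{\omega _j}^n,e_y^n}=\delta _{\omega _j,y}$, and the formal sum over $j$ on the right-hand sides genuinely indexes distinct matrix elements (no terms collapse). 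Second, for $x\in Q_n$ the factor $\delta _{x,\omega _{\ab{x}}}$ vanishes unless $x=\omega _j$ with $j=\ab{x}$, in which case $\omega _{\ab{x}-1}=\omega _{j-1}$; the conventions $\mu _n(\omega _0)=e_{\omega _0}^n=e_{\omega _{n+1}}^n=0$ are exactly what is needed to absorb the would-be terms at $j=1$ and $j=n$.

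For part (a), I would start from the definition $\varbigtriangleup _\omega ^nf(x)=\sqbrac{f(x)-f(\omega _{\ab{x}-1})}\delta _{x,\omega _{\ab{x}}}$. Putting $f=e_y^n$ gives $\varbigtriangleup _\omega ^ne_y^n(x)=\sqbrac{\delta _{x,y}-\delta _{\omega _{\ab{x}-1},y}}\delta _{x,\omega _{\ab{x}}}$, which by the remark above is zero unless $x=\omega _j$ with $j=\ab{x}$, where it equals $\delta _{\omega _j,y}-\delta _{\omega _{j-1},y}$. Hence $\varbigtriangleup _\omega ^ne_y^n=\sum _{j=1}^n\paren{\delta _{\omega _j,y}-\delta _{\omega _{j-1},y}}e_{\omega _j}^n$. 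Applying instead $\sum _{j=1}^n\ket{e_{\omega _j}^n}\paren{\bra{e_{\omega _j}^n}-\bra{e_{\omega _{j-1}}^n}}$ to $e_y^n$ and using $\bra{e_{\omega _j}^n}e_y^n=\delta _{\omega _j,y}$ reproduces exactly the same vector, so the two operators agree on every basis vector and are therefore equal.

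For part (b), I would run the identical argument using the closed form \eqref{eq28}, namely $\varbigtriangledown _\omega ^nf(x)=\sqbrac{\mu _n(\omega _{\ab{x}-1})f(x)-\mu _n(x)f(\omega _{\ab{x}-1})}\delta _{x,\omega _{\ab{x}}}$. At $f=e_y^n$ and $x=\omega _j$ this equals $\mu _n(\omega _{j-1})\delta _{\omega _j,y}-\mu _n(\omega _j)\delta _{\omega _{j-1},y}$, so $\varbigtriangledown _\omega ^ne_y^n=\sum _{j=1}^n\sqbrac{\mu _n(\omega _{j-1})\delta _{\omega _j,y}-\mu _n(\omega _j)\delta _{\omega _{j-1},y}}e_{\omega _j}^n$, which is precisely the $e_y^n$-image of $\sum _{j=1}^n\ket{e_{\omega _j}^n}\sqbrac{\mu _n(\omega _{j-1})\bra{e_{\omega _j}^n}-\mu _n(\omega _j)\bra{e_{\omega _{j-1}}^n}}$. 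One could alternatively combine part (a) with the identity $\varbigtriangledown _\omega ^n=\mu _n\varbigtriangleup _\omega ^n+\alpha _\omega$, but going through \eqref{eq28} is cleaner.

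There is no real obstacle here: the statement is essentially a transcription of the pointwise definitions into operator form, and the computation is routine. The only points that demand a little care are the two bookkeeping facts above — the distinctness of the $\omega _j$, which guarantees the sums index distinct matrix elements rather than collapsing, and the endpoint conventions at $j=1$ and $j=n$, which is exactly what the stated conventions handle.
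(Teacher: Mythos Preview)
Your proof is correct and follows essentially the same approach as the paper's own proof: both verify the operator identities by evaluating each side on basis vectors and comparing. The only cosmetic difference is that you test against a general $e_y^n$ for $y\in Q_n$, whereas the paper restricts to $e_{\omega _k}^n$ (the case $y\notin\brac{\omega _1,\ldots,\omega _n}$ being trivially zero on both sides); your version is thus slightly more explicit but not materially different.
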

\begin{proof}
We first observe that
\begin{equation*} 
\sum _{j=1}^n\ket{e_{\omega _j}^n}\paren{\bra{e_{\omega _n}^n}-\bra{e_{\omega _{j-1}}^n}}e_{\omega _k}^n
  =e_{\omega _k}^n-e_{\omega _{k+1}}^n
\end{equation*}
On the other hand
\begin{equation}         
\label{eq41}
\varbigtriangleup _\omega ^ne_{\omega _k}^n(x)
  =\sqbrac{e_{\omega _k}^n(x)-e_{\omega _k}^n(\omega _{\ab{x}-1})}\delta _{x,\omega _{\ab{x}}}
\end{equation}
Now the right side of \eqref{eq41} is zero if $\omega _{\ab{x}}\ne x$, $1$ if $\omega _k=x$ and $-1$ if $\omega _{k+1}=x$. The first result now follows. The second result is similar.
\end{proof}
The proof of the next theorem is similar to that of Theorem~\ref{thm41}.

\begin{thm}       
\label{thm42}
If $\omega =\omega _1\omega _2\cdots\omega _n$, $\omega '=\omega '_1\omega '_2\cdots\omega '_n\in\Omega _n$,
then
\begin{align*}
\varbigtriangleup &_{\omega ,\omega '}^n
  =\sum _{j,k=1}^n\ket{e_{\omega _j}^n\otimes e_{\omega '_k}^n}
  \sqbrac{\bra{e_{\omega _j}^n\otimes e _{\omega '_k}^n}-\bra{e_{\omega _{j-1}}^n\otimes e_{\omega '_{k-1}}^n}}
\intertext{and}
\varbigtriangledown &_{\omega ,\omega '}^n\\
  &\!=\!\sum _{j,k=1}^n\ket{e_{\omega _j}^n\otimes e_{\omega '_k}^n}
  \sqbrac{D_n(\omega _{j-1},\omega '_{k-1})\bra{e_{\omega _j}^n\otimes e_{\omega '_k}^n}
  \!-\!D_n(\omega _j,\omega '_k)\bra{e_{\omega _{j-1}}^n\otimes e_{\omega '_{k-1}}^n}}
\end{align*}
\end{thm}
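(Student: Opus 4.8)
The plan is to verify both identities by evaluating each side on the orthonormal basis $\brac{e_x^n\otimes e_y^n\colon x,y\in Q_n}$ of $L_n$, in the same spirit as the proof of Theorem~\ref{thm41}. Two preliminary remarks organize the computation. First, since $\omega =\omega _1\cdots\omega _n\in\Omega _n$ satisfies $\ab{\omega _j}=j$, the components $\omega _1,\ldots ,\omega _n$ are pairwise distinct, and likewise for $\omega '$; hence $\elbows{e_{\omega _j}^n,e_{\omega _p}^n}=\delta _{j,p}$, so that the off-diagonal inner products collapse to single Kronecker deltas. Second, both sides of each identity annihilate $e_x^n\otimes e_y^n$ unless $x=\omega _j$ and $y=\omega '_k$ for some $j,k\in\brac{1,\ldots ,n}$: for the right-hand sides this is immediate from orthonormality, while for $\varbigtriangleup _{\omega ,\omega '}^n$ and $\varbigtriangledown _{\omega ,\omega '}^n$ it follows from the factors $\delta _{x,\omega _{\ab{x}}}\delta _{y,\omega '_{\ab{y}}}$ in \eqref{eq25} and \eqref{eq29} together with $\ab{\omega _j}=j$. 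So it suffices to test the two operators on the vectors $e_{\omega _p}^n\otimes e_{\omega '_q}^n$ with $1\le p,q\le n$.

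Fix such $p,q$. Applying the claimed series for $\varbigtriangleup _{\omega ,\omega '}^n$ to $e_{\omega _p}^n\otimes e_{\omega '_q}^n$ and using $\elbows{e_{\omega _j}^n\otimes e_{\omega '_k}^n,e_{\omega _p}^n\otimes e_{\omega '_q}^n}=\delta _{j,p}\delta _{k,q}$ and $\elbows{e_{\omega _{j-1}}^n\otimes e_{\omega '_{k-1}}^n,e_{\omega _p}^n\otimes e_{\omega '_q}^n}=\delta _{j,p+1}\delta _{k,q+1}$, the double sum collapses to $e_{\omega _p}^n\otimes e_{\omega '_q}^n-e_{\omega _{p+1}}^n\otimes e_{\omega '_{q+1}}^n$, the boundary contributions ($p=n$ or $q=n$) vanishing by the convention $e_{\omega _{n+1}}^n=0$. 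On the other side, substituting $f=e_{\omega _p}^n\otimes e_{\omega '_q}^n$ into \eqref{eq25} gives, at a point $(x,y)$, the value $\sqbrac{\delta _{x,\omega _p}\delta _{y,\omega '_q}-\delta _{\ab{x},p+1}\delta _{\ab{y},q+1}}\delta _{x,\omega _{\ab{x}}}\delta _{y,\omega '_{\ab{y}}}$, since $(e_{\omega _p}^n\otimes e_{\omega '_q}^n)(u,v)=\delta _{u,\omega _p}\delta _{v,\omega '_q}$ and $\delta _{\omega _{\ab{x}-1},\omega _p}=\delta _{\ab{x}-1,p}$. Because $x=\omega _p$ forces $\ab{x}=p$ (so $\delta _{x,\omega _{\ab{x}}}=1$ there automatically), and because $\ab{x}=p+1$ together with $x=\omega _{\ab{x}}$ forces $x=\omega _{p+1}$, this reduces to $\delta _{x,\omega _p}\delta _{y,\omega '_q}-\delta _{x,\omega _{p+1}}\delta _{y,\omega '_{q+1}}$, which is precisely $e_{\omega _p}^n\otimes e_{\omega '_q}^n-e_{\omega _{p+1}}^n\otimes e_{\omega '_{q+1}}^n$ evaluated at $(x,y)$. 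This settles the first identity.

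The second identity follows from the same computation with the scalar weights carried along: applied to $e_{\omega _p}^n\otimes e_{\omega '_q}^n$, the claimed series for $\varbigtriangledown _{\omega ,\omega '}^n$ collapses to $D_n(\omega _{p-1},\omega '_{q-1})\,e_{\omega _p}^n\otimes e_{\omega '_q}^n-D_n(\omega _p,\omega '_q)\,e_{\omega _{p+1}}^n\otimes e_{\omega '_{q+1}}^n$, and substituting the same $f$ into \eqref{eq29} yields exactly this, using $\ab{x}=p$ when $x=\omega _p$ and $\ab{x}=p+1$, $x=\omega _{\ab{x}}$ when $x=\omega _{p+1}$. The boundary terms are handled by the conventions $e_{\omega _{n+1}}^n=0$ and $D_n(\omega _0,\cdot )=0$ (the latter being consistent with $\mu _n(\omega _0)=0$) inherited from Theorem~\ref{thm41}. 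I do not anticipate a genuine obstacle in this argument: the only points that need care are the index bookkeeping at the two ends $j=1,n$ (and $k=1,n$) of the chains and the remark that $\omega _j=\omega _p$ can occur only for $j=p$, which is what makes the off-diagonal inner products reduce cleanly — both are settled once one uses $\ab{\omega _i}=i$. In effect, these matrix-element formulas are just a re-encoding of \eqref{eq25} and \eqref{eq29}, and the content of the proof is the collapse of the double sums verified above.
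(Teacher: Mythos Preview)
Your argument is correct and is exactly the approach the paper has in mind: the paper's own proof is the single sentence ``The proof of the next theorem is similar to that of Theorem~\ref{thm41},'' and you have carried out precisely that analogue, testing both operators on the basis vectors $e_{\omega _p}^n\otimes e_{\omega '_q}^n$ and using $\ab{\omega _i}=i$ to collapse the Kronecker deltas. One small index slip: when the series for $\varbigtriangledown _{\omega ,\omega '}^n$ is applied to $e_{\omega _p}^n\otimes e_{\omega '_q}^n$, the second surviving term has coefficient $D_n(\omega _{p+1},\omega '_{q+1})$ (from $j=p+1,\ k=q+1$), not $D_n(\omega _p,\omega '_q)$; the left-hand side from \eqref{eq29} gives the same value, so the identity still holds.
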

It follows from Theorem~\ref{thm42} that
\begin{align}         
\label{eq42}
\rscript _{\omega ,\omega '}^n
  &=\varbigtriangledown_{\omega ,\omega '}^n-\varbigtriangledown _{\omega ',\omega}^n\notag\\
  &=\sum _{j,k=1}^n\left[D_n(\omega _{j-1},\omega '_{k-1})
  \ket{e_{\omega _j}^n\otimes e_{\omega '_{k-1}}^n}\bra{e_{\omega _j}^n\otimes e_{\omega '_k}^n}\right.\notag\\
  &\quad \left.-D(\omega '_{j-1},\omega _{k-1})
  \ket{e_{\omega '_j}^n\otimes e_{\omega _k}^n}\bra{e_{\omega '_j}^n\otimes e_{\omega _k}^n}\right]\notag\\
  &\quad +\sum _{j,k=1}^n\left[D_n(\omega '_k,\omega _j)
  \ket{e_{\omega '_j}^n\otimes e_{\omega _k}^n}\bra{e_{\omega '_{j-1}}^n\otimes e_{\omega _{k-1}}^n}\right.\notag\\
  &\quad \left.- D(\omega _j,\omega '_k)
  \ket{e_{\omega _j}^n\otimes e_{\omega '_k}^n}\bra{e_{\omega _{j-1}}^n\otimes e_{\omega '_{k-1}}^n}\right]
\end{align}
The matrix element representations of $\dscript _{\omega ,\omega '}^n$ and $\tscript _{\omega ,\omega '}^n$ can now be obtained from \eqref{eq42}

\end{document}